\newcommand{\First}[0]{{\mbox{\sf First}}}
\newcommand{\Last}[0]{{\mbox{\sf Last}}}
\newcommand{\Z}[0]{{\mbox{\sf Z}}}
\newcommand{\A}{{\underline{a}}}
\newcommand{\C}{{\underline{c}}}
\newcommand{\T}{{\underline{t}}}
\newcommand{\partitle}[1]{}                        
\newcommand{\commentout}[1]{}
\newtheorem{lemma}{Lemma}
\newtheorem{theorem}{Theorem}
\newdefinition{definition}{Definition}
\newenvironment{proof}{\noindent{\bf Proof:\rm}}{\qed \bigbreak}
\newcommand{\bi}{\begin{itemize}}
\newcommand{\ei}{\end{itemize}}
\newcommand{\be}{\begin{enumerate}}
\newcommand{\ee}{\end{enumerate}}
\newcommand{\bd}{\begin{description}}
\newcommand{\ed}{\end{description}}
\journal{}
\begin{document}

\begin{frontmatter}





\title{FM-index of Alignment with Gaps}

\author[Sejong]{Joong Chae Na}

    \ead{jcna@sejong.ac.kr}

    \address[Sejong]{Department of Computer Science and Engineering,
        Sejong University,
        Seoul 05006, South Korea}

\author[SNU]{Hyunjoon Kim}

    \ead{hjkim@theory.snu.ac.kr}

    \address[SNU]{School of Computer Science and Engineering,
        Seoul National University,
        Seoul 08826, South Korea}

\author[SNU]{Seunghwan Min}

    \ead{shmin@theory.snu.ac.kr}

\author[Hanyang]{Heejin Park}

    \ead{hjpark@hanyang.ac.kr}

    \address[Hanyang]{Department of Computer Science and Engineering,
        Hanyang University,
        Seoul 04763, South Korea}

\author[Rouen,Perth]{Thierry Lecroq}

    \ead{Thierry.Lecroq@univ-rouen.fr}

    \address[Rouen]{Normandie University,
         UNIROUEN, UNIHAVRE,INSA Rouen, LITIS,
         76000 Rouen, France}
    \address[Perth]{Centre for Combinatorics on Words \& Applications,
        School of Engineering \& Information Technology, Murdoch University,
        Murdoch WA 6150, Australia}

\author[Rouen]{Martine L\'eonard}

    \ead{Martine.Leonard@univ-rouen.fr}

\author[Rouen,Poly]{Laurent Mouchard}

    \ead{Laurent.Mouchard@univ-rouen.fr}

    \address[Poly]{Laboratoire d'Informatique de l'Ecole Polytechnique (LIX),
        CNRS UMR 7161, France}

\author[SNU]{Kunsoo Park\corref{cor1}}

    \ead{kpark@theory.snu.ac.kr}

    \cortext[cor1]{Corresponding author.}

\newpage



\begin{abstract}
Recently, a compressed index for similar strings, called the {\em FM-index of alignment} (FMA),
 has been proposed with the functionalities of pattern search and random access.
The FMA is quite efficient in space requirement and pattern search time,
 but it is applicable only for an alignment of similar strings without gaps.
In this paper we propose the {\em FM-index of alignment with gaps},
 a realistic index for similar strings, which allows gaps in their alignment.
For this,
 we design a new version of the suffix array of alignment
 by using alignment transformation and a new definition of the alignment-suffix.
The new suffix array of alignment
 enables us to support the LF-mapping and backward search, the key functionalities of the FM-index,
 regardless of gap existence in the alignment.
We experimentally compared our index with RLCSA due to M\"{a}kinen et al.
 on 100 genome sequences from the 1000 Genomes Project.
The index size of our index is less than one third of that of RLCSA.
\end{abstract}

\begin{keyword}
Indexes for similar strings, FM-indexes, Suffix arrays, Alignments,
Backward search.
\end{keyword}

\end{frontmatter}

\pagenumbering{arabic} \setcounter{page}{1}


\section{Introduction}


\partitle{Problem def. and Previous works}

A lot of indexes not only storing similar strings but also supporting efficient pattern search
 have been developed
 such as RLCSA~\cite{recomb/MakinenNSV09,jcb/MakinenNSV10,spire/SirenVMN08},
 LZ-scheme based indexes~\cite{tcs/DoJSS14,rsta/FerradaGHP14,tcs/KreftN13}
 compressed suffix trees~\cite{spire/AbeliukN12,sea/NavarroP14},
 and so on~\cite{aaim/HuangLSTY10,iwoca/Navarro12}.
To exploit the similarity of the given strings,
 most of them use classical compression schemes
 such as run-length encoding and Lempel-Ziv compressions~\cite{spire/KuruppuPZ10,Ziv&Lempel:77}.
Recently,
 Na et al.~\cite{tcs/NaKPLLMP16,iwoca/NaPCHIMP13,spire/NaPLHLMP13}
 took a new approach using an alignment of similar strings without classical compression schemes,
 and they proposed indexes of alignment called
 the {\em suffix tree of alignment}~\cite{iwoca/NaPCHIMP13},
 the {\em suffix array of alignment} (SAA)~\cite{spire/NaPLHLMP13}, and
 the {\em FM-index of alignment} (FMA)~\cite{tcs/NaKPLLMP16}.
The FMA, a compressed version of the SAA, is the most efficient among the three indexes
 but it is applicable only for an alignment of similar strings {\em without gaps}.

\partitle{motivation - vcf}

However, real-world data include gaps in alignments.
Figure~\ref{fig:vcf} shows Variant Call Format (VCF) files
 created by SAMtools (Sequence Alignment/Map tools)
 for sequences from the 1000 Genomes Project~\cite{nature/1000Genomes10}.
A VCF file contains alignment information
 between an individual sequence and its reference sequence.
Note that not only substitutions but also indels (insertions and deletions)
 are contained in an alignment.
For example, the first line of the `VCF 3' file in Fig.~\ref{fig:vcf}
 says that {\tt AT} at position 786763 in the reference sequence
 is aligned with {\tt A} in the individual sequence.
Thus, the FMA~\cite{tcs/NaKPLLMP16} allowing only substitutions in an alignment
 is an unrealistic index.

\begin{figure}
\centering
\includegraphics[scale=0.5]{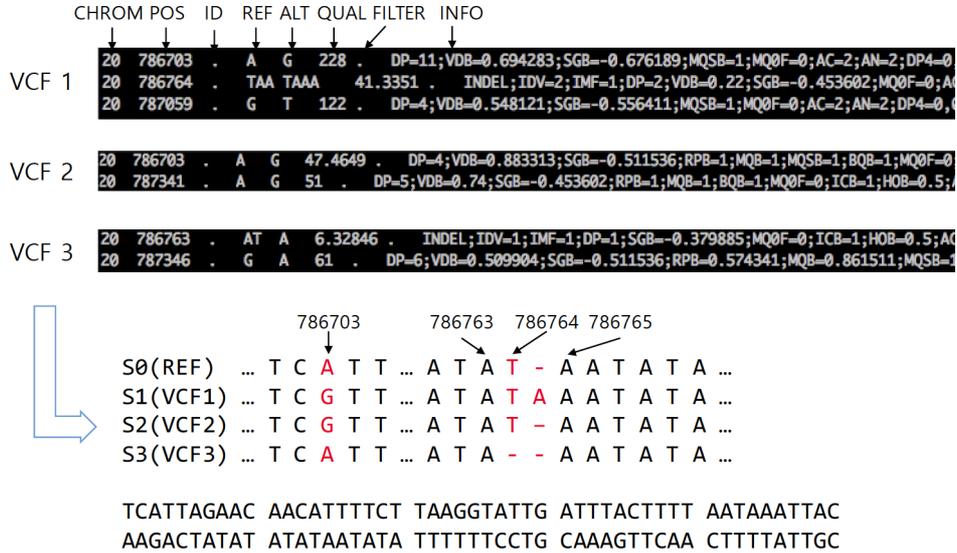}
\caption{Example of a VCF file.}
\label{fig:vcf}
\end{figure}

\partitle{our contribution}

In this paper we propose a new FM-index of alignment,
 a realistic compressed index for similar strings,
 allowing indels as well as substitutions in an alignment.
(We call our index the FMA with gaps and the previous version the FMA without gaps.)
For this, we design a new version of the SAA by using an alignment transformation and
 a new definition of the suffix of an alignment (called the {\em alignment-suffix}).
In our index,
 an alignment is divided into two kinds of regions, common regions and non-common regions,
 and gaps in a non-common region are put together into one gap in the transformed alignment.
The alignment-suffix is defined for the transformed alignment
 but its definition is different from those defined
 in~\cite{tcs/NaKPLLMP16,iwoca/NaPCHIMP13,spire/NaPLHLMP13}.
Due to the alignment transformation and the new definition of the alignment-suffix,
 our index supports the LF-mapping and backward search,
 the key functionalities of the
 FM-index~\cite{focs/FerraginaM00,soda/FerraginaM01,jacm/FerraginaM05},
 regardless of gap existence in the alignment.

\partitle{common and NC}

For constructing our index, we must find common regions and non-common regions for the given strings
 but we do not need to find a multiple alignment for the given strings
 since the knowledge about positions where substitutions and indels occur
 are of no use in our transformed alignment.
Finding common and non-common regions is much easier and simpler than finding a multiple alignment.
For instance, common regions and non-common regions
 between an individual sequence and its reference sequence
 can be directly obtained from a VCF file.
In the example of Fig.~\ref{fig:vcf}, position 786703 is a non-common region
 and positions 786704..786763 are a common region.
Hence,  based on the reference sequence,
 common regions and non-common regions of 100 genome sequences
 can be easily created.

We implemented the FMA with gaps and did experiments
 on 100 genome sequences from the 1000 Genomes Project.
We compared our FMA with RLCSA
 due to M\"{a}kinen et al.~\cite{jcb/MakinenNSV10}.
The index size of our FMA is less than one third of that of RLCSA.
Our index is faster in pattern search and
 RLCSA is faster in random access.

\partitle{Organization}

This paper is organized as follows.
We first describe our FMA and search algorithm
 for an alignment with gaps in Section 2
 and give experimental results in Section 3.
In Section 4, we conclude with remarks.


\section{FM-index of alignment with gaps}
\label{sec:FMA}



\subsection{Alignments with gaps}
\label{subsec:alignment}


\partitle{intro.}

Consider a multiple alignment in Fig.~\ref{fig:alignment} (a)
 of four similar strings:
    $S^1 = $ {\tt \$cct\C\A aac\C \#},
    $S^2 = $ {\tt \$cct\C\C\A aac\A \#},
    $S^3 = $ {\tt \$cct\T\A\T aac\underline{}\#}, and
    $S^4 = $ {\tt \$cct\underline{}aac\C \#}.
These strings are the same except the underlined characters
 and one string can be transformed into another strings
 by replacing, inserting or deleting underlined substrings.
Formally, we are given an alignment $\Upsilon$ of $m$ similar strings
 $S^j = \alpha_1 \Delta^j_1 \ldots \alpha_r \Delta^j_r \alpha_{r+1}$ ($1 \le j \le m$)
  over an alphabet $\Sigma$,
  where $\alpha_i$ $(1\le i \le r+1)$ is a common substring in all strings and
 $\Delta^j_i$ ($1\le i \le r$) is a non-common substring in string $S^j$.
In the example above,  $\alpha_1 =$ {\tt \$cct}, $\alpha_2 =$ {\tt aac}, $\alpha_3 =$ {\tt \#}.
Without loss of generality,
 we assume $\alpha_1$ starts with {\tt \$} and $\alpha_{r+1}$ ends with {\tt \#}
 where {\tt \$} and {\tt \#} are special symbols occurring nowhere else in $S^j$,
 and each $\alpha_i$ is not empty.
%

%

\begin{figure}[t]
\renewcommand{\arraystretch}{0.9}
\setlength{\tabcolsep}{0.8mm}
\setlength{\doublerulesep}{0.2pt}
%
{\centering \tt 
\qquad
\begin{tabular}[t]{ r cc cc  cccc | c cc |c| c }
 \small \rm pos.
  & \footnotesize $1$ & \footnotesize $2$ & \footnotesize $3$ & \footnotesize $4$
  & \multicolumn{4}{|c|}{\footnotesize\rm 5~ 6~ 7~ 8}
  & \footnotesize $9$ & \footnotesize $0$ & \footnotesize $1$ & \footnotesize $2$ & \footnotesize $3$ \\
%
 $S^1 =$ & \$ & c & c & t & \multicolumn{4}{|c|}{\underline{c - a -}} & a & a & c & \C & \# \\
 $S^2 =$ & \$ & c & c & t & \multicolumn{4}{|c|}{\underline{c c a -}} & a & a & c & \A & \# \\
 $S^3 =$ & \$ & c & c & t & \multicolumn{4}{|c|}{\underline{t - a t}} & a & a & c & \underline{-} & \# \\
 $S^4 =$ & \$ & c & c & t & \multicolumn{4}{|c|}{\underline{- - - -}} & a & a & c & \C & \# \\
%
 & \multicolumn{2}{c}{\small $\tilde{\alpha}^{\diamond}_1$}
 & \multicolumn{2}{|c}{\small$\tilde{\alpha}^{+}_1$} & \multicolumn{4}{|c|}{\small$\Delta_1$}
 & \multicolumn{1}{c|}{\small$\tilde{\alpha}^{\diamond}_2$}
 & \multicolumn{2}{c|}{\small$\tilde{\alpha}^{+}_2$} & \multicolumn{1}{c|}{\small$\Delta_2$}
 & \multicolumn{1}{c}{\small$\alpha_3$} \\
%
%
 \multicolumn{14}{c}{\rm (a)}
\end{tabular}
\qquad
\begin{tabular}[t]{ r cc cc ccc| c| cc c| c }
 \small \rm pos.
  & \footnotesize $1$ & \footnotesize $2$
  & \multicolumn{5}{|c|}{\footnotesize\rm ~3 ~4 ~5 ~6 ~7 ~}
  & \footnotesize $8$
  & \footnotesize $9$ & \footnotesize $0$ & \footnotesize $1$ & \footnotesize $2$ \\
%
 $S^1 =$ & \$ & c & \multicolumn{5}{|c|}{- c t \underline{c a}} & a & a & c & \C & \# \\
 $S^2 =$ & \$ & c & \multicolumn{5}{|c|}{c t \underline{c c a}} & a & a & c & \A & \# \\
 $S^3 =$ & \$ & c & \multicolumn{5}{|c|}{c t \underline{t a t}} & a & - & a & c & \# \\
 $S^4 =$ & \$ & c & \multicolumn{5}{|c|}{- - - c t} & a & a & c & \C & \# \\
%
 & \multicolumn{2}{c}{\small$\tilde{\alpha}^{\diamond}_1$}
 & \multicolumn{5}{|c|}{\small$\tilde{\alpha}^{+}_1 \Delta_1$}
 & \multicolumn{1}{c|}{\small$\tilde{\alpha}^{\diamond}_2$}
 & \multicolumn{3}{c|}{\small$\tilde{\alpha}^{+}_2 \Delta_2$}
 & \multicolumn{1}{c}{\small$\alpha_3$} \\
%
%
 \multicolumn{13}{c}{\rm (b)}
\end{tabular}
}

\caption{\label{fig:alignment}
 An example of (a) an original alignment and (b) its transformed alignment.
}
\end{figure}


\partitle{alpha+}

For each common substring $\alpha_i$,
we define $\tilde{\alpha}^{+}_{i}$ as follows\footnote{Note that
 the definition of $\tilde{\alpha}^{+}_{i}$ is different from that of $\tilde{\alpha}^{*}_{i}$
 in~\cite{tcs/NaKPLLMP16,iwoca/NaPCHIMP13,spire/NaPLHLMP13}.
The $\tilde{\alpha}^{+}_{i}$ is longer than $\tilde{\alpha}^{*}_{i}$ by one.}.
\begin{definition}
The string $\tilde{\alpha}^{+}_{i}$ $(1 \le i \le r)$ is the shortest suffix of $\alpha_i$
occurring only once in each string $S^{j}$ ($1\le j \le m$)
and $\tilde{\alpha}^{+}_{r+1}$ is an empty string.
\end{definition}
Consider $\alpha_{1}= \mbox{\tt \$cct}$ in Fig.~\ref{fig:alignment}.
Since the suffix {\tt t} of length 1 occurs more than once in $S^3$
 but the suffix {\tt ct} of length 2 occurs only once in each string,
 $\tilde{\alpha}^{+}_{1}$ is {\tt ct}.
Similarly, $\tilde{\alpha}^{+}_{2}$ is {\tt ac},
 which is the shortest suffix of $\alpha_2$ occurring only once in each string.
Without loss of generality, for $2\le i \le r+1$,
 $\tilde{\alpha}^{+}_{i}$ is assumed to be shorter than $\alpha_{i}$.
(If $\tilde{\alpha}^{+}_{i}$ is equal to $\alpha_{i}$, we merge $\alpha_i$
 with its adjacent non-common substrings $\Delta^j_{i-1}$ and $\Delta^j_{i}$,
 and regard $\Delta^j_{i-1} \alpha_{i} \Delta^j_i$ as one non-common substring).

\partitle{Alignment transformation}

For indexing similar strings whose alignment includes gaps,
 we first transform the given alignment $\Upsilon$ into its right-justified form $\widetilde{\Upsilon}$
 so that the characters
in each $\tilde{\alpha}^+_i \Delta^j_i$ ($ 1\le i \le r$, $ 1 \le j \le m$)
 are right-justified.
See Fig.~\ref{fig:alignment} for an example,
 where a gap is represented by a series of hyphens `{\tt -}'
(note that `{\tt -}' is not a character).
Hereafter, to indicate positions of characters in $S^j$,
 we use the positions in the transformed alignment $\widetilde{\Upsilon}$
 and denote by $\widetilde{S}^j[i]$ the character of $S^j$ at the $i$th position in $\widetilde{\Upsilon}$.
If $\widetilde{S}^j[i]$ is `{\tt -}',
 we say $\widetilde{S}^j[i]$ is empty.
The positions in ${S}^j$ and $\widetilde{S}^j$ can be easily converted into each other
 by storing gap information.
Moreover, we denote the suffix of $\widetilde{S}^j$ starting at position $q$ by suffix $(j, q)$,
e.g., the suffix $(3, 8)$ is {\tt aac\#} in Fig.~\ref{fig:alignment}.

\partitle{rep. of alignment}

An alignment of similar strings can be compactly represented by combining each common substring $\alpha_i$ in all strings
 as in~\cite{tcs/NaKPLLMP16,iwoca/NaPCHIMP13,spire/NaPLHLMP13}.
However, the representation is not suitable for the transformed alignment $\widetilde{\Upsilon}$
because the characters in $\tilde{\alpha}^+_i$ are not aligned in $\widetilde{\Upsilon}$.
Thus, we introduce another representation.
Let $\tilde{\alpha}^{\diamond}_i$ $(1 \le i \le r+1)$ be the prefix of $\alpha_i$
 such that $\alpha_i = \tilde{\alpha}^{\diamond}_i \tilde{\alpha}^+_i$.
Then, we represent the transformed alignment $\widetilde{\Upsilon}$
  by combining $\tilde{\alpha}^{\diamond}_i$ (rather than $\alpha_i$):
 $\widetilde{\Upsilon} = \tilde{\alpha}^{\diamond}_{1}
  (\tilde{\alpha}^{+}_{1} \Delta^1_1 / \cdots / \tilde{\alpha}^{+}_{1} \Delta^m_1)
   \cdots  \tilde{\alpha}^{\diamond}_{r}
  (\tilde{\alpha}^{+}_{r} \Delta^1_r / \cdots /  \tilde{\alpha}^{+}_{r} \Delta^m_r)
  \tilde{\alpha}^{\diamond}_{r+1}$.
The alignment in Fig.~\ref{fig:alignment} is represented as
 $\widetilde{\Upsilon} =$ {\tt \$c(ct\C\A /ct\C\C\A/ct\T\A\T/ct)a(ac\C/ac\A/ac/ac\C)\#}.
We denote
  $(\tilde{\alpha}^{+}_{i} \Delta^1_i / \cdots / \tilde{\alpha}^{+}_{i} \Delta^m_i)$
  by $\tilde{\alpha}^{+}_{i} \Delta_i$
 and call it a {\em ps-region} (partially-shared region).
Also, we call $\tilde{\alpha}^{\diamond}_{i}$ a {\em cs-region} (completely-shared region).


\subsection{Suffix array and FM-index of alignment with gaps}
\label{subsec:FMA}


\partitle{a-suffixes}

In our index, one or more suffixes starting at an identical position $q$
 are compactly represented by one {\em alignment-suffix} (for short {\em a-suffix})
 defined as follows.
We have two cases according to whether the starting position $q$ is
 in a cs-region or a ps-region.
\bi
\item The case when $q$ is in a cs-region $\tilde{\alpha}^{\diamond}_i$ ($1\le i \le r+1$).
    Let $\alpha'_i$ be the suffix of $\tilde{\alpha}^{\diamond}_i$ starting at $q$.
    All the suffixes starting at $q$ is represented by the a-suffix
     $\alpha'_i(\tilde{\alpha}^{+}_{i} \Delta^{1}_i / \cdots / \tilde{\alpha}^{+}_{i} \Delta^{m}_i )
      \cdots$.
    In the previous example, the suffixes starting at position 8
     are represented by the a-suffix {\tt a(ac\C/ac\A/ac/ac\C)\#}.

\item The case when $q$ is in a ps-region $\tilde{\alpha}^+_i \Delta_i$ ($1\le i \le r$).
    Let $\delta^{j}_i$ ($1\le j \le m$) be the suffix of $\tilde{\alpha}^+_i \Delta^j_i$ starting at $q$.
    Then, the set of the suffixes starting at $q$ is partitioned
     so that the suffixes of $S^{j_1}$ and $S^{j_2}$ are in the same subset
     if and only if $\delta^{j_1}_i =  \delta^{j_2}_i$.
    For each subset
     $\{ \delta^{j_1}_i \cdots \tilde{\alpha}^{\diamond}_{r+1}, \ldots ,
      \delta^{j_k}_i\cdots \tilde{\alpha}^{\diamond}_{r+1} \}$,
     all the suffixes in the subset are represented by the a-suffix
     $(\delta^{j_1}_i / \cdots / \delta^{j_k}_i )\cdots \tilde{\alpha}^{\diamond}_{r+1}$.
    For example, the set of the suffixes starting at position 9 is partitioned into
     two subsets $\{\widetilde{S}^1[9..12], \widetilde{S}^4[9..12] \}$ and $\{\widetilde{S}^2[9..12] \}$,
    and they are represented by the a-suffixes
     {\tt (ac\C/ac\C)\#} and {\tt ac\A \#}, respectively.
    Note that no suffix of $\widetilde{S}^3$ starts at position 9.

\ei
The suffixes represented by an a-suffix appear consecutively
 in the generalized suffix array of the given strings
 since $\tilde{\alpha}^{+}_i$ occurs only once in each given string.
Note that a suffix of $\tilde{\alpha}^{+}_i$ may occur more than once in a string,
 and thus $\tilde{\alpha}^{+}_{i}$ does not belong to a cs-region but to a ps-region.

\begin{figure}[t]
\setlength{\tabcolsep}{1.3mm}
{\centering \small
\begin{tabular}{c|c c|c|l|c| c c | c c | c c}
\hline 
$idx$ & \multicolumn{2}{c|}{SAA} & $F$ & \multicolumn{1}{c|}{a-suffixes} & $L$ & \multicolumn{6}{c}{$occ(\sigma,i)$ \& $B_{\sigma}$} \\
      & $strs$ & $pos$ &     &\multicolumn{1}{c|}{(cyclic shifts)}&  & \multicolumn{2}{c}{{\tt a}}&\multicolumn{2}{c}{{\tt c}}&\multicolumn{2}{c}{{\tt t}} \\
\hline 
1&  0&  1&  {\tt \$}&   {\tt  \$c(ct\C\A /ct\C\C\A/ct\T\A\T/ct)a(ac\C/ac\A/ac/ac\C)\#}&     {\tt \#}&   0&  &   0&  &   0&   \\
2&  0&  12& {\tt \#}&   {\tt \#\$c(ct\C\A /ct\C\C\A/ct\T\A\T/ct)a(ac\C/ac\A/ac/ac\C) }&     {\tt a,c}&  1&  &   1&  &   0&   \\
3&  2&  11& {\tt a }&   {\tt \A \#\$cct\C \C \A aac}&                                       {\tt c}&    1&  &   2&  &   0&   \\
4&  1,2&    7&  {\tt a }&   {\tt (\A/\A)a(ac\C/ac\A)\#\$c(ct\C/ct\C\C) }&                   {\tt c}&    1&  &   3&  &   0&   \\
5&  0&  8&  {\tt a}&    {\tt a(ac\C/ac\A/ac/ac\C)\#\$c(ct\C\A /ct\C\C\A/ct\T\A\T/ct)}&      {\tt a,t}&  2&  &   3&  &   1&   \\
6&  3&  10& {\tt a}&    {\tt ac\#\$cct\T \A \T a}&                                          {\tt a}&    3&  1&  3&  &   1&   \\
7&  2&  9&  {\tt a}&    {\tt ac\A \#\$cct\C \C \A a}&                     {\tt $\langle$a$\rangle$}&    3&  1&  3&  &   1&   \\
8&  1,4&    9&  {\tt a}&    {\tt (ac\C/ac\C)\#\$c(ct\C\A/ct)a}&           {\tt $\langle$a$\rangle$}&    3&  1&  3&  &   1&   \\
9&  3&  6&  {\tt a }&   {\tt \A \T aac\#\$cct\T }&                                          {\tt t}&    3&  &   3&  &   2&   \\
10& 1,3,4&  11& {\tt c }&   {\tt (\C/c/\C)\#\$c(ct\C\A/ct\T\A\T/ct)a(ac/a/ac)}&             {\tt a,c}&  4&  &   4&  &   2&   \\
11& 2&  10& {\tt c}&    {\tt c\A \#\$cct\C \C \A aa}&                                       {\tt a}&    5&  &   4&  &   2&   \\
12& 1,2&    6&  {\tt c }&   {\tt (\C\A/\C\A)a(ac\C/ac\A)\#\$c(ct/ct\C) }&                   {\tt c,t}&  5&  &   5&  &   3&   \\
13& 1,4&    10& {\tt c}&    {\tt (c\C/c\C)\#\$c(ct\C\A/ct)a(a/a)}&                          {\tt a}&    6&  &   5&  &   3&   \\
14& 2&  5&  {\tt c}&    {\tt \C \C \A aac\A \#\$cct}&                                       {\tt t}&    6&  &   5&  &   4&   \\
15& 0&  2&  {\tt c}&    {\tt c(ct\C\A /ct\C\C\A/ct\T\A\T/ct)a(ac\C/ac\A/ac/ac\C)\#\$}&      {\tt \$}&   6&  &   5&  &   4&   \\
16& 4&  6&  {\tt c}&    {\tt ctaac\C \#\$c}&                                                {\tt c}&    6&  &   6&  1&  4&   \\
17& 1&  4&  {\tt c}&    {\tt ct\C \A aac\C \#\$c}&                        {\tt $\langle$c$\rangle$}&    6&  &   6&  1&  4&   \\
18& 2&  3&  {\tt c}&    {\tt ct\C \C \A aac\A \#\$c}&                     {\tt $\langle$c$\rangle$}&    6&  &   6&  1&  4&   \\
19& 3&  3&  {\tt c}&    {\tt ct\T \A \T aac\#\$c}&                        {\tt $\langle$c$\rangle$}&    6&  &   6&  1&  4&   \\
20& 3,4&    7&  {\tt t }&   {\tt (\T/t)a(ac/ac\C)\#\$c(ct\T\A/c)}&                          {\tt a,c}&  7&  &   7&  &   4&   \\
21& 3&  5&  {\tt t }&   {\tt \T \A \T aac\#\$cct}&                                          {\tt t}&    7&  &   7&  &   5&   \\
22& 1&  5&  {\tt t}&    {\tt t\C \A aac\C \#\$cc}&                                          {\tt c}&    7&  &   8&  &   5&   \\
23& 2&  4&  {\tt t}&    {\tt t\C \C \A aac\A \#\$cc}&                                       {\tt c}&    7&  &   9&  &   5&   \\
24& 3&  4&  {\tt t}&    {\tt t\T \A \T aac\#\$cc}&                                          {\tt c}&    7&  &   10& &   5&   \\
\hline 
\end{tabular}
}
\caption{\label{fig:FMA}
The SAA and FMA for
$\widetilde{\Upsilon} =$ {\tt \$c(ct\C\A /ct\C\C\A/ct\T\A\T/ct)a(ac\C/ac\A/ac/ac\C)\#}.
 (Bit 0 is omitted in $B_{\sigma}$.)
}
\end{figure}


\partitle{SAA \& BWT}

The {\em suffix array of alignment (SAA)}
 is a lexicographically sorted array of all the a-suffixes of
 the transformed alignment $\widetilde{\Upsilon}$.
See Fig.~\ref{fig:FMA} for an example,
 where the string number 0 indicates the string numbers $1, \ldots, m$.
We denote by $SAA[i]$ the $i$th entry of the SAA.
Let us consider a-suffixes in the SAA as cyclic shifts (rotated alignments)
 as in the Burrows-Wheeler transform~\cite{Burrows&Wheeler:94}.
Then, the array $F[i]$ (resp. $L[i]$)
 is the set of the first (resp. last) characters of the suffixes
  represented by the a-suffix in $SAA[i]$.
By definition of the a-suffixes,
 the first characters of the suffixes represented by an a-suffix are of the same value
 and thus $F[i]$ has one element.
However, $L[i]$ may have more than one element (at most $|\Sigma|$ elements)
 as shown in Fig.~\ref{fig:FMA}.
For example, $F[13] = \{ \widetilde{S}^1[10], \widetilde{S}^4[10] \} = \{\mbox{\tt c}\}$
 and $L[10] = \{ \widetilde{S}^1[10], \widetilde{S}^3[10], \widetilde{S}^4[10] \}
   = \{ \mbox{\tt a,c} \}$.
Since gaps are not considered as characters in $\widetilde{\Upsilon}$,
 when letting $q$ be the position of the characters in $F[i]$,
 the positions of the characters in $L[i]$ may be less than $q-1$.
(On the other hand, the position of the characters in $L[i]$
 is always $q-1$ for an alignment without gaps when $q>1$.)
In Fig.~\ref{fig:FMA}, $F[17] = \widetilde{S}^{1}[4]$ and
  $L[17] = \widetilde{S}^{1}[2]$ because $\widetilde{S}^{1}[3]$ is empty.

\partitle{LF-mapping}

We define the LF-mapping for the arrays $L$ and $F$.
Let $\mathcal{L}$ be the set of pairs
 of a character $\sigma$ and an entry index $i$ such that $\sigma \in L[i]$.
In the example of Fig.~\ref{fig:FMA},
 $\mathcal{L} = \{ (\mbox{\tt \#},1), (\mbox{\tt a},2), (\mbox{\tt c},2), (\mbox{\tt c},3), (\mbox{\tt c},4),
 (\mbox{\tt a},5),  (\mbox{\tt t},5), \ldots \} $.
For a pair $(\sigma, i) \in \mathcal{L}$,
 the {\em LF-mapping $LF(\sigma, i)$}
 is defined as the index of $F[k]$ containing the characters corresponding to $\sigma$ in $L[i]$.
For example, see $L[10] =\{ \widetilde{S}^1[10], \widetilde{S}^3[10], \widetilde{S}^4[10] \} = \{ \mbox{\tt a},\mbox{\tt c} \}$.
Since {\tt a} in $L[10]$ corresponds to $\widetilde{S}^{3}[10]$
 and it is contained in $F[6]$,
 $LF(\mbox{\tt a}, 10) = 6$.
Similarly, $LF(\mbox{\tt c}, 10) = 13$
 since $\widetilde{S}^{1}[10]$ and $\widetilde{S}^{4}[10]$ (i.e., {\tt c} in $L[10]$)
 are contained in $F[13]$.
Note that  $LF(\mbox{\tt c}, 10)$ is well defined
 since the characters in $L[10]$ whose values are {\tt c}
 are all contained in an {\em identical} entry $F[13]$.
This is always true in the transformed alignment $\widetilde{\Upsilon}$
 even though gaps exist in $\widetilde{\Upsilon}$,
 as shown in the following lemma.
(It is not true in the untransformed alignment $\Upsilon$.)
\begin{lemma} \label{lem:LF}
For a pair $(\sigma,i) \in \mathcal{L}$,
 the characters in $L[i]$ whose values are $\sigma$
 are all contained in an identical entry of $F$.
\end{lemma}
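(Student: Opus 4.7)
The plan is to fix an index $i$ in the SAA, let $q$ be the common starting position of the suffixes represented by $SAA[i]$, and let $J'$ be the set of strings contributing those suffixes. I would then case-split on whether $q$ lies in a cs-region or a ps-region, and in each case identify, for every $j \in J'$, the first non-empty position $q'_j < q$ of $\widetilde{S}^j$ (whose character is $j$'s contribution to $L[i]$). The aim is to show that, for every $\sigma \in L[i]$, all strings $j \in J'$ with $\widetilde{S}^j[q'_j]=\sigma$ are grouped into the same a-suffix, so that the LF-mapping lands in a single $F[k]$.

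When $q$ sits strictly inside a cs-region $\tilde{\alpha}^{\diamond}_k$, the position $q-1$ is also in that cs-region and $L[i]$ is a singleton, so the claim is immediate. When $q$ is the leftmost position of $\tilde{\alpha}^{\diamond}_k$ with $k>1$, the position $q-1$ is the rightmost position of the preceding ps-region; since $\tilde{\alpha}^+_{k-1}$ is non-empty and the ps-region is right-justified, every $\widetilde{S}^j[q-1]$ is non-empty, and $\delta^j_{k-1}$ at $q-1$ is exactly the single character $\widetilde{S}^j[q-1]$. The a-suffix partition at $q-1$ therefore groups strings precisely by that character, so all $j$'s giving the same $\sigma$ lie in one a-suffix and hence in one $F[k]$.

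The substantive case is $q \in \tilde{\alpha}^+_k \Delta_k$. If $q$ is the leftmost position of this ps-region, each $j \in J'$ has $\widetilde{S}^j[q]$ non-empty, which forces $j$'s ps-content to fill the whole ps-region; then $\widetilde{S}^j[q-1]$ is the last character $c^\ast$ of $\tilde{\alpha}^{\diamond}_k$ for every $j \in J'$, and they all map into the single cs-region a-suffix at $q-1$. If $q$ is not leftmost, $J'$ splits a priori into strings with $\widetilde{S}^j[q-1]$ non-empty (preceding character sits at $q-1$) and strings with $\widetilde{S}^j[q-1]$ empty (preceding character is obtained by walking back to the end of $\tilde{\alpha}^{\diamond}_k$, giving the common $c^\ast$). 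Within the first group, any two strings sharing the same $\sigma$ at $q-1$ also share the same $\delta^j_k$ at $q-1$ (namely $\sigma$ prepended to their common $\delta^j_k$ at $q$), so they fall in one a-suffix at $q-1$; within the second group, they all target the unique cs-region a-suffix ending at $\tilde{\alpha}^{\diamond}_k$.

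The hard part is ruling out a mixed $J'$ in which some $j_1$ has $\widetilde{S}^{j_1}[q-1]$ non-empty while some $j_2$ has $\widetilde{S}^{j_2}[q-1]$ empty, because then both could contribute $\sigma = c^\ast$ but point to different $F$-entries (one in a ps-region a-suffix at $q-1$, one in the cs-region a-suffix). I plan to exclude this using the uniqueness of $\tilde{\alpha}^+_k$ in each $S^j$: string $j_2$'s ps-content would have to begin exactly at $q$, so $\delta^{j_2}_k$ equals all of $\tilde{\alpha}^+_k \Delta^{j_2}_k$ and in particular begins with $\tilde{\alpha}^+_k$. Since $\delta^{j_1}_k = \delta^{j_2}_k$, the substring $\delta^{j_1}_k$ of $S^{j_1}$ likewise begins with $\tilde{\alpha}^+_k$, but it starts strictly to the right of $j_1$'s usual ps-content start (because $j_1$'s content is longer than $j_2$'s). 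This produces a second occurrence of $\tilde{\alpha}^+_k$ in $S^{j_1}$, contradicting its defining property, and the case analysis closes.
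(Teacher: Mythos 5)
Your proof is correct and follows essentially the same route as the paper: a case analysis on whether $\widetilde{S}^{j}[q-1]$ is empty for the contributing strings, with the decisive step being the exclusion of the mixed case via the single-occurrence property of $\tilde{\alpha}^{+}_{k}$. Your write-up is somewhat more detailed, explicitly working out the subcases that the paper delegates to the earlier FMA paper, but the underlying argument is the same.
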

\begin{proof}
Let $q$ be the starting position of the suffixes in $SAA[i]$,
 and $\widetilde{S}^{j_1}[q_1]$ and $\widetilde{S}^{j_2}[q_2]$ ($j_1 \neq j_2$) be
 two characters in $L[i]$ whose values are $\sigma$.
Without loss of generality, we assume $q>1$.
Then, $q_1$ and $q_2$ are less than $q$.
We have three cases according to
 whether $\widetilde{S}^{j_1}[q-1]$ and $\widetilde{S}^{j_2}[q-1]$ are empty.
\bi
\item
    First, when none of $\widetilde{S}^{j_1}[q-1]$ and $\widetilde{S}^{j_2}[q-1]$ are empty
    (i.e., $q_1 = q_2 = q-1$),
    $\widetilde{S}^{j_1}[q_1]$ and $\widetilde{S}^{j_2}[q_2]$ are contained in an identical entry of $F$
    by definition of the a-suffix,
    which can be shown as in~\cite{tcs/NaKPLLMP16}.
\item
    Second, when both of $\widetilde{S}^{j_1}[q-1]$ and $\widetilde{S}^{j_2}[q-1]$ are empty,
     both $\widetilde{S}^{j_1}[q_1]$ and $\widetilde{S}^{j_2}[q_2]$
     are the last character in a cs-region $\tilde{\alpha}^{\diamond}_i$
     since the characters in ps-region $\tilde{\alpha}^{+}_i \Delta_i$ are right-justified in $\widetilde{\Upsilon}$.
    Thus, $q_1 = q_2$ and by definition of the a-suffix,
     the suffixes $(j_1,q_1)$ and $(j_2,q_2)$ are contained in an identical entry of the SAA.
    Hence, $\widetilde{S}^{j_1}[q_1]$ and $\widetilde{S}^{j_2}[q_2]$
     are contained in an identical entry of $F$.
\item
    The third case is when only one of $\widetilde{S}^{j_1}[q-1]$ and $\widetilde{S}^{j_2}[q-1]$ is empty.
    We show by contradiction that this case cannot happen.
    Without loss of generality,
     assume $\widetilde{S}^{j_1}[q-1]$ is empty and $\widetilde{S}^{j_2}[q-1]$ is not empty.
    Since $\widetilde{S}^{j_1}[q-1]$ is empty,
     $\widetilde{S}^{j_1}[q_1]$ is the last character in a cs-region $\tilde{\alpha}^{\diamond}_k$
     and $\widetilde{S}^{j_1}[q]$ is the first character in ps-region $\tilde{\alpha}^{+}_k \Delta_k$.
    It means that the suffix $(j_1,q)$ is prefixed by $\tilde{\alpha}^{+}_k$.
    Since both suffixes $(j_1,q)$ and $(j_2,q)$ are in $SAA[i]$,
     by definition of the a-suffix,
     the suffix $(j_2,q)$ is also prefixed by $\tilde{\alpha}^{+}_k$.
    Since $\tilde{\alpha}^{+}_k$ occurs only once in each string,
      $\widetilde{S}^{j_2}[q_2]$ is the last character in $\tilde{\alpha}^{\diamond}_k$
     (i.e., $q_1 = q_2$) and
     $\widetilde{S}^{j_2}[q_2+1..q-1]$ is empty.
    It contradicts with the assumption that $\widetilde{S}^{j_2}[q-1]$ is not empty.
\ei
Therefore, the characters in $L[i]$ whose values are $\sigma$
 are all contained in an identical entry of $F$.
\end{proof}
For a character $\sigma \in \Sigma$,
 a pair $(\sigma, i) \in \mathcal{L}$ will be called an $\mathcal{L}_\sigma$-pair.
For two $\mathcal{L}_\sigma$-pairs $(\sigma,i)$ and $(\sigma,i')$,
 we say that $(\sigma, i)$ is smaller than $(\sigma, i')$
 if and only if $i < i'$.

\partitle{two types of LF-mapping}

The LF-mapping $LF(\sigma,i)$ is not a one-to-one correspondence.
Multiple pairs can be mapped to the same entry in $F$.
See $L[6] = \{\widetilde{S}^3[8]\} = \{\mbox{\tt a} \}$,
 $L[7] = \{\widetilde{S}^2[8]\} = \{\mbox{\tt a} \}$,
 and $L[8] = \{\widetilde{S}^1[8], S^4[8]\} = \{\mbox{\tt a} \}$.
Since all of them are {\tt a} in $F[5]$,
 $LF(\mbox{\tt a}, 6) = LF(\mbox{\tt a}, 7) = LF(\mbox{\tt a}, 8) = 5$.
Thus, we classify pairs $(\sigma,i) \in \mathcal{L}$ into two types:
A pair $(\sigma,i) \in \mathcal{L}$ is an {\em (m:1)-type} (many-to-one mapping-type)
pair if there exists another pair $(\sigma,i') \in \mathcal{L}$
 such that $LF(\sigma,i) = LF(\sigma,i')$;
 otherwise, $(\sigma,i)$ is a {\em (1:1)-type} (one-to-one mapping-type) pair.
The following lemma shows that for a (m:1)-type pair $(\sigma,i)$,
 the last characters of all the suffixes in $SAA[i]$ are mapped to an identical entry in $F$.
(This lemma is necessary for our search algorithm to work correctly
 and it is also satisfied for the FMA without gaps~\cite{tcs/NaKPLLMP16}.
 However, it is not satisfied when defining our index
 using $\tilde{\alpha}^{*}$ in~\cite{tcs/NaKPLLMP16} rather than
 $\tilde{\alpha}^{+}$.)
\begin{lemma} \label{lem:m-1}
If a pair $(\sigma,i) \in \mathcal{L}$ is of (m:1)-type,
 the values of the characters in $L[i]$ are all the same $\sigma$.
\end{lemma}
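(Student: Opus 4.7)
The plan is to prove the contrapositive: if $L[i]$ contains at least two distinct character values, then no pair $(\sigma,i)\in\mathcal{L}$ is of (m:1)-type. The argument will split on the position $q$ at which the suffixes of $SAA[i]$ start in the transformed alignment $\widetilde{\Upsilon}$; its structure mirrors the case analysis of Lemma~1, combining an ``empty versus non-empty $\widetilde{S}^{j}[q-1]$'' split with a forcing argument that collapses two a-suffixes into one.

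First I would dispose of the configurations in which $L[i]$ is automatically single-valued, so the multi-valued hypothesis fails vacuously. These include $q$ interior to a cs-region (every preceding non-gap is the unique cs-character at $q-1$); $q$ at the first position of a ps-region $\tilde{\alpha}^{+}_{h}\Delta_{h}$ occupying positions $[a,b]$ (only strings of full width $\ell_{j}=b-a+1$ appear, and their common preceding character is the cs-character at $a-1$); and $q$ interior to that ps-region with $\widetilde{S}^{j}[q-1]$ empty for every represented string (right-justification sends every preceding non-gap back to the cs-character at $a-1$).

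The only cases in which $L[i]$ can legitimately carry more than one value are then (I) $q$ at the first position of a cs-region, so that $q-1$ is the always non-gap last position of the preceding ps-region and $SAA[i]$ contains all $m$ strings; and (II) $q$ interior to a ps-region with $\widetilde{S}^{j}[q-1]$ non-gap for every represented string. For case~(II) I would first prove a ``no mixing'' auxiliary fact: within a single $SAA[i]$, strings with $\widetilde{S}^{j}[q-1]$ non-gap and strings with it empty cannot coexist, because otherwise the shorter $\tilde{\alpha}^{+}_{h}\Delta^{j}_{h}$ would equal a proper suffix of the longer $\tilde{\alpha}^{+}_{h}\Delta^{j'}_{h}$, placing two copies of $\tilde{\alpha}^{+}_{h}$ inside $S^{j'}$ and violating its uniqueness in each string (this is exactly the mechanism the authors use in Case~3 of the proof of Lemma~1).

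To finish cases (I) and (II), suppose $(\sigma,i')$ with $i'\neq i$ satisfies $LF(\sigma,i)=LF(\sigma,i')=k$, let $p$ be the starting position of $SAA[k]$ (so $p=q-1$ by definition of the LF-mapping), and take any witness $j'\in SAA[k]$ whose $\sigma$-character in $L[i']$ realizes the mapping. Because $\widetilde{S}^{j'}[q]$ is non-gap (by being in a cs-region in case~(I), or by right-justification of the ps-region in case~(II)), the preceding non-gap of the starting position $q'$ of $SAA[i']$ can equal $q-1$ only when $q'=q$. In case~(I) the cs-position $q$ admits a single a-suffix containing all $m$ strings, so $i'=i$; in case~(II), $\delta^{j}_{q-1}=\delta^{j'}_{q-1}$ together with the shared leading character $\sigma$ gives $\delta^{j}_{q}=\delta^{j'}_{q}$, placing $j'$ in the same a-suffix partition at $q$ as $j$ and again forcing $i'=i$. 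Either conclusion contradicts the (m:1) hypothesis. The step I expect to be most delicate is the ``no mixing'' sub-lemma in case~(II); once it is in hand, the reduction to $q'=q$ and the partition equality are routine.
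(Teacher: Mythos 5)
Your proof is correct, but it is organized quite differently from the paper's. The paper argues \emph{directly} from the target entry of the mapping: it observes that $(\sigma,i)$ can be of (m:1)-type only when the entry $SAA[k]$ with $k=LF(\sigma,i)$ starts at the last position of a cs-region $\tilde{\alpha}^{\diamond}_{j}$ (this is asserted ``by definition of the a-suffix'' with no case analysis), concludes that every suffix in $SAA[i]$ is prefixed by $\tilde{\alpha}^{+}_{j}$, and then invokes the single occurrence of $\tilde{\alpha}^{+}_{j}$ in each string to force every preceding character to be the last character of $\tilde{\alpha}^{\diamond}_{j}$, i.e.\ $\sigma$ --- about five lines in total. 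You instead prove the contrapositive from the \emph{source} side: you classify the starting positions $q$ at which $L[i]$ can be multi-valued (first position of a cs-region, or interior of a ps-region with all predecessors non-gap, after your ``no mixing'' sub-lemma rules out mixed entries) and show the LF-mapping is injective on $\mathcal{L}_{\sigma}$-pairs in exactly those configurations. The two arguments rest on the same structural facts --- right-justification of ps-regions and uniqueness of $\tilde{\alpha}^{+}_{j}$ --- but yours is longer because it establishes (1:1)-ness in cases the paper never needs to touch; in exchange it supplies the detailed justification for the paper's unproved ``only if'' characterization of where (m:1)-type pairs originate, and in the single-valued cases it replaces the uniqueness argument by the simpler observation that right-justification sends every preceding non-gap back to position $a-1$ of the enclosing cs-region. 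Your reduction in cases (I) and (II) ($q'=q$, followed by $\delta^{j}_{q}=\delta^{j'}_{q}$ forcing $i'=i$) is sound, as is the ``no mixing'' argument, which correctly mirrors Case~3 of Lemma~\ref{lem:LF} (note only the harmless slip of attributing the two copies of $\tilde{\alpha}^{+}_{h}$ to the wrong string name at one point).
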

\begin{proof}
Let $k=LF(\sigma, i)$ and $q_k$ be the starting position of the suffixes in $SAA[k]$.
By definition of the a-suffix,
 the pair $(\sigma,i)$ is of (m:1)-type
 only if $q_k$ is the last position in a cs-region $\tilde{\alpha}^{\diamond}_j$
 and the last character in $\tilde{\alpha}^{\diamond}_j$ is $\sigma$.
Hence, all the suffixes in $SAA[i]$ are prefixed by $\tilde{\alpha}^{+}_j$.
Since $\tilde{\alpha}^{+}_j$ occurs only once in each string,
 the preceding character of $\tilde{\alpha}^{+}_j$
 is the last character in $\tilde{\alpha}^{\diamond}_j$, i.e., $\sigma$.
Therefore, $L[i]$ has only one character $\sigma$.
\end{proof}
To handle (m:1)-type pairs in $\mathcal{L}$ efficiently,
 we define bit-vectors $B_{\sigma}$'s as follows:
 $B_{\sigma}[i] = 1$
 if and only if $(\sigma,i)$ is in $\mathcal{L}$
 and it is of (m:1)-type
 (see Fig.~\ref{fig:FMA}).

\partitle{LF-mapping computation}

The LF-mapping can be easily computed
 using the array $C$ and the function $occ$ defined
 as follows~\cite{tcs/NaKPLLMP16}.
\bi
\item For $\sigma \in \Sigma$,
     $C[\sigma]$ is the total number of entries in $F$
      containing characters alphabetically smaller than $\sigma$.
     $C[|\Sigma|+1]$ is the size of $F$.
\item For a character $\sigma\in \Sigma$ and an entry index $i$ in the SAA,
    $occ(\sigma, i)$ is the number of $\mathcal{L}_{\sigma}$-pairs $(\sigma, i')$
    such that $i' \le i$, i.e.,
    the number of entries in $L[1..i]$ containing the character $\sigma$.
    If more than one pair $(\sigma, i') \in \mathcal{L}$
     are mapped to an identical entry in $F$,
     we count only the smallest $\mathcal{L}_{\sigma}$-pair among them.
    For example, consider $occ(\mbox{\tt a},i)$ for $i=6, 7, 8$.
    Since {\tt a}'s in $L[6..8]$ are all contained in $F[5]$,
     we only count $(\mbox{\tt a},6)$
     and thus $occ(\mbox{\tt a},i)$'s are the same for $i = 6, 7, 8$.
    In Fig.~\ref{fig:FMA}, uncounted characters in $L$
     are indicated by $\langle ~ \rangle$.
\ei
Then, $LF(\sigma,i) = C[\sigma] + occ(\sigma,i)$.
See $L[10]$ in Fig.~\ref{fig:FMA},
 which has two $\mathcal{L}_\sigma$-pairs $(\mbox{\tt a},10)$ and $(\mbox{\tt c},10)$.
We have $LF(\mbox{\tt a},10) = C[\mbox{\tt a}] + occ(\mbox{\tt a},10) = 2 + 4 = 6$
 and $LF(\mbox{\tt c},10) = C[\mbox{\tt c}] + occ(\mbox{\tt c},10) = 9 + 4 = 13$.
%


\subsection{Pattern Search}
\label{subsec:Search}


\partitle{intro. and range ($\First_\ell$, $\Last_\ell$)}

Pattern search is to find all occurrences of a given pattern $P[1..p]$
 in the given strings $S^1,\ldots, S^m$.
Our pattern search algorithm
 proceeds backward using the LF-mapping with the array $C$ and the function $occ$.
It consists of at most $p$ steps from Step $p$ to Step $1$.
In Step $\ell = p, \ldots, 1$,
 the algorithm finds the range $(\First_\ell, \Last_\ell)$ in the SAA
 defined as follows:
\be
\item[i)] $\First_p$ (resp. $\Last_p$) is the smallest (resp. largest) index $i$
  such that $F[i] = \{P[p]\}$.
\item[ii)] For $\ell=p-1,\ldots, 1$,
  $\First_\ell$ (resp. $\Last_\ell$) is the LF-mapping value of
  the smallest (resp. largest) $\mathcal{L}_{\sigma}$-pair
  in the range $(\First_{\ell+1}, \Last_{\ell+1})$,
 where $\sigma = P[\ell]$.
 If there exists no $\mathcal{L}_{\sigma}$-pair in $(\First_{\ell+1}, \Last_{\ell+1})$,
  then we set $\First_\ell = \Last_\ell+1$.
\ee
Then, all the suffixes prefixed by $P[\ell..p]$
 are in $SAA[\First_\ell..\Last_\ell]$ and
 the size of the range decreases monotonically
 when $\ell$ decreases.

\partitle{set $\Z_\ell$}

While the size of the range $(\First_\ell, \Last_\ell)$
 is greater than one (i.e., $\First_\ell < \Last_\ell$),
all the suffixes in $SAA[\First_\ell..\Last_\ell]$ are prefixed by $P[\ell..p]$.
%
When $\First_\ell = \Last_\ell$, however,
 some suffixes in $SAA[\First_\ell]$ may not be prefixed by $P[\ell..p]$.
For example, when assuming that $P=$ {\tt aaacc},
 we have $(\First_2,\Last_2) = (5,5)$, and
 the suffixes $(1,8)$ and $(4,8)$ in $SAA[5]$ are prefixed by {\tt aacc}
 but the other suffixes $(2,8)$ and $(3,8)$ are not.
Also,  we have $(\First_1,\Last_1) = (4,4)$, and
 the suffix $(1,7)$  in $SAA[4]$ is prefixed by {\tt aaacc}
 but the suffix $(2,7)$ is not.
Thus, in addition to the range $(\First_\ell, \Last_\ell)$,
 we maintain the set $\Z_\ell$ defined as follows:
\bi
\item When $\First_\ell = \Last_\ell$,
    $\Z_\ell$ is the set of the string numbers of the suffixes
     prefixed by $P[\ell..p]$.
\ei
For simplicity, we define $\Z_\ell$ to be $\{ 1, \ldots, m \}$ when $\First_\ell < \Last_\ell$.
Then, regardless of the size of the range $(\First_\ell,\Last_\ell)$,
 a suffix $(j,q)$ is prefixed by $P[\ell..p]$
 if and only if $(j,q) \in SAA[\First_\ell..\Last_\ell]$ and $j\in \Z_\ell$.
%


\begin{algorithm} [t]
\caption{ \label{alg:FMA-search}
\textsf{BackwardSearch($P[1..p]$)} \Comment{using the FM-index of alignment}
}
\begin{algorithmic}[1]

\State $\Z \gets \{1,\ldots,m\}$; \Comment{Set of all string numbers}
\State $\sigma \gets P[p]$,
         ~$\First \gets C[\sigma]+1$, ~$\Last \gets C[\sigma+1]$,
         ~$\ell \gets p-1$;

\While{$(\First \le \Last)$ {\bf and} $\Z \neq \emptyset$ {\bf and} $(\ell \ge 1)$}
    \State $\sigma \gets P[\ell]$, ~$\First' \gets \First$, ~$\Last' \gets \Last$;
                \Comment{Previous range}
    \State $\First \gets C[\sigma] + occ(\sigma,\First-1)+1$,
           ~$\Last \gets C[\sigma] + occ(\sigma,\Last)$;

\smallskip

    \If{$\First \ge \Last$}
        \State $Z_{\rm m} \gets \{ j \,|\, (j,q) \in SAA[i]$ such that
              $\First' \le i \le \Last'$ and $B_{\sigma}[i] = 1 \}$;
         \If{$Z_{\rm m} \neq \emptyset$}
            \State $\First \gets \Last$, ~$\Z \gets \Z \cap Z_{\rm m}$;
         \Else
            \State $Z_{\rm c} \gets \{ j \,|\, (j,q) \in SAA[\First..\Last] \}$,
                    \Comment{If $\First > \Last$, $Z_{\rm c} = \emptyset$}
            \State $\Z \gets \Z \cap Z_{\rm c}$;

        \EndIf

    \EndIf
    \State $\ell \gets \ell-1$;
\EndWhile

\medskip

\ForAll{$(j,q) \in SAA[\First..\Last]$}            \Comment{If $\First > \Last$, no occurrence}
    \If{$j \in \Z$} ~~print ``$(j, q)$";           \Comment{Reporting an occurrence}
    \EndIf
\EndFor

\end{algorithmic}
\end{algorithm}

\partitle{pseudo-code}

Algorithm~\ref{alg:FMA-search} shows the search algorithm using our index,
 which is the same as the code in~\cite{tcs/NaKPLLMP16}.
(Since the definition of the a-suffix is different from that in~\cite{tcs/NaKPLLMP16},
 however, we need a correctness proof which will be given later.)
The algorithm maintains the following loop invariant
 for a range $(\First, \Last)$ and a string number set $\Z$:
\bi
\item[] At the end of Step $\ell = p,\ldots, 1$,
 the range $(\First, \Last) = (\First_\ell, \Last_\ell)$ and $\Z = \Z_\ell$.
\ei
Initially (in Step $p$), we set $(\First,\Last) = (\First_p, \Last_p)$
 and $\Z = \Z_p$ (lines 1--2).
Each iteration of the while loop (lines 3--13)
 represents each Step $\ell=p-1,\ldots,1$.
In Step $\ell=p-1,\ldots,1$,
 we first compute range $(\First,\Last)$ using the LF-mapping of
 the previous range $(\First',\Last') = (\First_{\ell+1}, \Last_{\ell+1})$
 and $\sigma = P[\ell]$ (lines 4--5).
If the size of the range $(\First,\Last)$ is more than one,
 then $(\First,\Last) = (\First_\ell,\Last_\ell)$
 and $\Z = \Z_\ell = \{1,\ldots,m\}$.
Thus, we continue to the next step (by skipping lines 6--12 and going to line 13).
Otherwise (i.e., the size of $(\First,\Last)$ is one or less),
 we compute $\Z_\ell$ as follows (lines 6--12).
Let $Z_{\rm m}$ be the set of the string numbers in $SAA[i]$'s
 such that $\First_{\ell+1}\le i \le \Last_{\ell+1}$ and $B_\sigma[i] = 1$,
and let $Z_{\rm c}$ be the set of the string numbers in $SAA[\Last]$.
Then, $\Z_\ell = \Z_{\ell+1} \cap Z_{\rm m}$ if $Z_{\rm m} \neq \emptyset$,
 and $\Z_\ell = \Z_{\ell+1} \cap Z_{\rm c}$, otherwise.
(As in~\cite{tcs/NaKPLLMP16}, lines 10--12 for $Z_{\rm c}$ can be removed
 by using a loose definition and a lazy update for $\Z_\ell$.)
For example, assume $P=$ {\tt aaacc}.
In Step 2, given $(\First_3,\Last_3) = (8,8)$ and  $\Z_{3} = \{1,2,3,4\}$,
 we have $Z_{\rm m} = \{1,4\}$ and thus $\Z_{2} = \{1,4\}$.
In Step 1, given $(\First_2,\Last_2) = (5,5)$ and  $\Z_{2} = \{1,4\}$,
 we have $Z_{\rm m} = \emptyset$ and $Z_{\rm c} = \{1,2\}$ ($\Last_1 = 4$),
 and thus  $\Z_{1} = \{1,4\} \cap \{1,2\} = \{1\}$.
After the while loop terminates,
 the occurrences of $P$ are reported using the range $(\First,\Last)$ and $\Z$ (lines 14--15).
Since the SAA stores positions in the transformed alignment $\widetilde{\Upsilon}$,
 we need to convert them to the original positions in the given strings $S^j$,
 which can be easily done by using gap information.


\partitle{correctness}

Now we show the invariant is satisfied at the end of each step (an iteration of the while loop) by induction.
Trivially, the invariant is true at the end of Step $p$, which is the induction basis.
At the beginning of Step $\ell = p-1,\ldots,1$, by inductive hypothesis,
 $(\First, \Last) = (\,\First_{\ell+1}, \Last_{\ell+1})$.
After executing line 5,
 $\,\First =  C[\sigma] + occ(\sigma,\First_{\ell+1}-1) +1$
 and $\Last =  C[\sigma] + occ(\sigma,\Last_{\ell+1})$,
 where $\sigma = P[\ell]$.
Then, the following lemmas show Algorithm~\ref{alg:FMA-search}
 computes correctly $(\,\First_{\ell}, \Last_{\ell})$ and $\Z_\ell$
 at the end of Step $\ell$.
\begin{lemma} \label{lem:range2}
If $\,\First < \Last$,
 then $(\,\First_\ell, \Last_\ell) = (\First, \Last)$
 and  $\Z_\ell = \Z_{\ell+1}$.
\end{lemma}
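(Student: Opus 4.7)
The plan is to verify at step~$\ell$ the two halves of the invariant separately: the range identity $(\First, \Last) = (\First_\ell, \Last_\ell)$ and the set identity $\Z_\ell = \Z_{\ell+1}$. A preliminary observation is that the hypothesis $\First < \Last$ forces $\First_{\ell+1} < \Last_{\ell+1}$: if instead $\First_{\ell+1} = \Last_{\ell+1}$, then the formulas on line~5 yield $\Last - \First = occ(\sigma,\First_{\ell+1}) - occ(\sigma,\First_{\ell+1}-1) - 1 \le 0$. Consequently, by the inductive hypothesis together with the convention that $\Z_{\ell+1} = \{1,\ldots,m\}$ whenever the previous range has more than one entry, the set $\Z$ on entering step~$\ell$ equals $\{1,\ldots,m\}$, and $SAA[\First_{\ell+1}..\Last_{\ell+1}]$ is exactly the set of a-suffixes prefixed by $P[\ell+1..p]$.

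Writing $i^*_{\min}$ and $i^*_{\max}$ for the smallest and largest indices $i \in [\First_{\ell+1}, \Last_{\ell+1}]$ with $(\sigma,i) \in \mathcal{L}$, the definition of $\First_\ell, \Last_\ell$ unfolds to $\First_\ell = C[\sigma] + occ(\sigma, i^*_{\min})$ and $\Last_\ell = C[\sigma] + occ(\sigma, i^*_{\max})$. Monotonicity of $occ$ together with the maximality of $i^*_{\max}$ in $[\First_{\ell+1}, \Last_{\ell+1}]$ immediately yields $occ(\sigma, i^*_{\max}) = occ(\sigma, \Last_{\ell+1})$, so $\Last_\ell = \Last$ drops out. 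The dual identity $\First_\ell = \First$ is equivalent to $occ(\sigma, i^*_{\min}) = occ(\sigma, \First_{\ell+1}-1)+1$, and I expect the main obstacle to be ruling out the alternative that $(\sigma, i^*_{\min})$ is (m:1)-suppressed by an earlier $(\sigma, i') \in \mathcal{L}$ with $i' < \First_{\ell+1}$, in which case $occ$ would fail to jump at $i^*_{\min}$.

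To dispose of this bad case my plan is a three-case argument. By Lemma~\ref{lem:m-1} and the reasoning in its proof, suppression of $(\sigma, i^*_{\min})$ by $(\sigma, i')$ forces both $SAA[i']$ and $SAA[i^*_{\min}]$ to be prefixed by a common $\tilde{\alpha}^+_j$, and all SAA indices whose a-suffix begins with $\tilde{\alpha}^+_j$ form a single contiguous block of the SAA. Combined with the description of $SAA[\First_{\ell+1}..\Last_{\ell+1}]$ as the a-suffixes prefixed by $P[\ell+1..p]$, I would split on the relation between $P[\ell+1..p]$ and $\tilde{\alpha}^+_j$: if $P[\ell+1..p]$ is a proper prefix of $\tilde{\alpha}^+_j$, the entire $\tilde{\alpha}^+_j$-block lies in the range, contradicting $i' < \First_{\ell+1}$; if $\tilde{\alpha}^+_j$ is a prefix of $P[\ell+1..p]$, the entire range lies in the block, so every $\mathcal{L}_\sigma$-pair of the range maps to the same $F$-entry as $i'$, and a short calculation then gives $\Last = \First - 1$, contradicting $\First < \Last$; and if the two strings are incomparable, $SAA[i^*_{\min}]$ cannot be prefixed by both, a direct contradiction. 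All three cases fail, so $(\sigma, i^*_{\min})$ is counted and $\First_\ell = \First$. Having now established $\First_\ell = \First < \Last = \Last_\ell$, the convention on ranges of size more than one yields $\Z_\ell = \{1,\ldots,m\} = \Z_{\ell+1}$, closing the proof.
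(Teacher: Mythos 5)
Your proof is correct, but it reaches the conclusion by a genuinely different route than the paper's. The paper argues set-theoretically: it asserts the forward inclusion (every entry of $SAA[\First..\Last]$ is prefixed by $P[\ell..p]$) ``by definition of the LF-mapping,'' and then proves by contradiction that no entry outside the range is so prefixed --- the key step being that for any such entry the \emph{smallest} $\mathcal{L}_\sigma$-pair $(\sigma,k)$ mapping to it is always counted by $occ$, lies in the previous range, and therefore has its LF-image inside $[\First,\Last]$. You instead verify the line-5 arithmetic directly against the definition of $(\First_\ell,\Last_\ell)$ as LF-values of the extremal $\mathcal{L}_\sigma$-pairs of the previous range: $\Last_\ell=\Last$ falls out of monotonicity of $occ$, and the only obstruction to $\First_\ell=\First$ is that the smallest in-range $\mathcal{L}_\sigma$-pair might be suppressed in $occ$ by an earlier member of its (m:1) fiber. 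Your three-way prefix comparison between $P[\ell+1..p]$ and $\tilde{\alpha}^{+}_j$ is precisely the fact the paper leaves implicit in its terse forward inclusion (an (m:1) fiber cannot properly straddle the previous range when $\First<\Last$, since otherwise $\Last=\First-1$), so your argument makes explicit a point the paper glosses over. The price is that you lean on the contiguity of the block of entries prefixed by $\tilde{\alpha}^{+}_j$ and on the characterization of the previous range as exactly the entries prefixed by $P[\ell+1..p]$; but the paper's own proof relies on the same invariant when it asserts that $k$ ``is included in the previous range by definition.'' In short, your version proves the lemma closer to the letter of the definition of $\First_\ell$ and $\Last_\ell$, while the paper's version directly yields the semantic property (the range is exactly the set of entries prefixed by $P[\ell..p]$) that the subsequent reasoning uses; both are sound.
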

\begin{proof}
By definition of the LF-mapping,
 the suffixes in $SAA[\First..\Last]$
 are prefixed by $P[\ell..p]$.
We show that no suffix outside $SAA[\First..\Last]$ are prefixed by $P[\ell..p]$.
Suppose that a suffix prefixed by $P[\ell..p]$ is contained
 in an $SAA[i]$ outside $SAA[\First..\Last]$.
Then, all suffixes in $SAA[i]$ are prefixed by $P[\ell..p]$.
(If two suffixes in two distinct entries of the SAA are prefixed by $P[\ell..p]$,
 then all the suffixes in the two entries are prefixed by $P[\ell..p]$,
 which can be easily shown using the definition of the a-suffix.)
Let $(\sigma,k)$ be the smallest $\mathcal{L}_\sigma$-pair
 such that $LF(\sigma,k) = i$.
Since the suffixes in $SAA[k]$ are prefixed by $P[\ell+1..p]$,
 $k$ is included in the previous range $(\,\First_{\ell+1}, \Last_{\ell+1})$
 by definition
 and thus its LF-mapping value $i$ is also included in $(\First, \Last)$
 (note that the pair $(\sigma,k)$ is always counted in the function $occ$).
It contradicts with the assumption that $i$ is outside the range $(\First, \Last)$.
Therefore, we get $(\,\First_\ell, \Last_\ell) = (\First, \Last)$.
Furthermore,
 since $\First_\ell \neq \Last_\ell$,
 $\Z_\ell= \Z_{\ell+1} = \{1,\ldots,m\}$ by definition.
\end{proof}

\begin{lemma} \label{lem:range1m}
If $\,\First \ge \Last$ and $Z_{\rm m} \neq \emptyset$,
 then $ (\,\First_\ell, \Last_\ell) = (\Last, \Last)$
 and $\Z_\ell = \Z_{\ell+1} \cap Z_{\rm m}$.
\end{lemma}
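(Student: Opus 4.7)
The plan is to exploit the structural fact that $(\First', \Last')$ consists of SAA entries whose suffixes are prefixed by $P[\ell+1..p]$; together with $Z_{\rm m} \ne \emptyset$, this structure will force all $\mathcal{L}_\sigma$-pairs in the range into a single LF-group. I would first dispatch the easy direction $\Last_\ell = \Last$: let $(\sigma, i_{\max})$ denote the largest $\mathcal{L}_\sigma$-pair in the range, which exists because $Z_{\rm m} \ne \emptyset$ forces at least one such pair inside. Since no $\mathcal{L}_\sigma$-pair sits strictly above $i_{\max}$ in $(\First', \Last']$, $occ(\sigma, \cdot)$ is constant between $i_{\max}$ and $\Last'$, so $\Last_\ell = LF(\sigma, i_{\max}) = C[\sigma] + occ(\sigma, \Last') = \Last$.

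The heart of the proof is $\First_\ell = \Last$. Let $(\sigma, i^*) \in (\First', \Last')$ be an (m:1)-type pair witnessed by $Z_{\rm m} \ne \emptyset$. Applying Lemma~\ref{lem:m-1}, all suffixes in $SAA[i^*]$ start with some $\tilde{\alpha}^+_j$ and are preceded by $\sigma$, the last character of $\tilde{\alpha}^{\diamond}_j$. Since these suffixes are also prefixed by $P[\ell+1..p]$, $P[\ell+1..p]$ and $\tilde{\alpha}^+_j$ must be comparable as prefixes. In either subcase---$P[\ell+1..p]$ a prefix of $\tilde{\alpha}^+_j$, or $\tilde{\alpha}^+_j$ a prefix of $P[\ell+1..p]$---the uniqueness of $\tilde{\alpha}^+_j$ in each string pins every other suffix prefixed by $P[\ell+1..p]$ that induces an $\mathcal{L}_\sigma$-pair to begin inside the same ps-region $\tilde{\alpha}^+_j \Delta_j$ and be preceded by the same $\sigma$ at the last position of $\tilde{\alpha}^{\diamond}_j$. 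Consequently every $\mathcal{L}_\sigma$-pair in $(\First', \Last')$ LF-maps to a single $F$-entry, giving $\First_\ell = \Last_\ell = \Last$.

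For $\Z_\ell = \Z_{\ell+1} \cap Z_{\rm m}$: since $\First_\ell = \Last_\ell$, by definition $\Z_\ell$ is the set of strings $j$ whose suffix represented in $SAA[\Last]$ is prefixed by $P[\ell..p]$. For such a $j$, shifting one position later yields a suffix of $\widetilde{S}^j$ prefixed by $P[\ell+1..p]$ that lies in some $SAA[i]$ with $\First' \le i \le \Last'$; the preceding character must equal $\sigma$, and the (m:1)-structure established above forces $B_\sigma[i] = 1$, whence $j \in Z_{\rm m}$, while $j \in \Z_{\ell+1}$ is immediate. The converse inclusion follows by reversing the chain.

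The main obstacle will be the structural step in the second paragraph: arguing that the comparability of $P[\ell+1..p]$ with $\tilde{\alpha}^+_j$, together with the single-occurrence property of $\tilde{\alpha}^+_j$ in each string, rules out any second LF-group intersecting $(\First', \Last')$ under the constraint $\First \ge \Last$. The two subcases on the relative lengths of $P[\ell+1..p]$ and $\tilde{\alpha}^+_j$ need distinct handling, and the gap-based right-justification of $\widetilde{\Upsilon}$ must be accounted for when relating predecessor positions across different strings.
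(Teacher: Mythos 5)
Your overall skeleton matches the paper's: compute $\Last_\ell$ directly from the definition of $occ$, show that all $\mathcal{L}_\sigma$-pairs in $(\First',\Last')$ collapse onto the single entry $F[\Last]$, and then use Lemma~\ref{lem:m-1} to identify $Z_{\rm m}$ with the strings whose preceding character is $\sigma$. The gap is in the middle step. You claim that ``in either subcase'' the single occurrence of $\tilde{\alpha}^+_j$ per string pins every $\mathcal{L}_\sigma$-pair in the range to the same ps-region and hence to the same $F$-entry. That is false in the subcase where $P[\ell+1..p]$ is a proper prefix of $\tilde{\alpha}^+_j$: there the range $(\First',\Last')$ consists of all entries whose suffixes begin with $P[\ell+1..p]$, and such suffixes need not begin with $\tilde{\alpha}^+_j$ nor start inside the ps-region $\tilde{\alpha}^+_j\Delta_j$; they can start anywhere in the alignment and happen to be preceded by $\sigma$, producing $\mathcal{L}_\sigma$-pairs that LF-map to other entries of $F$. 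Uniqueness of $\tilde{\alpha}^+_j$ only constrains suffixes that actually contain $\tilde{\alpha}^+_j$ as a prefix, which is the other subcase. You flag exactly this as ``the main obstacle'' but leave it unresolved, and the structural argument alone cannot resolve it.

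What closes the case---and what the paper's proof actually leans on---is the hypothesis $\First \ge \Last$, which translates to $occ(\sigma,\Last') - occ(\sigma,\First'-1) \le 1$, i.e., at most one \emph{counted} $\mathcal{L}_\sigma$-pair lies in the range. In the problematic subcase the entire (m:1) group targeting $F[\Last]$ is contained in the range (every suffix prefixed by $\tilde{\alpha}^+_j$ is prefixed by $P[\ell+1..p]$), so its smallest pair is counted inside the range; any further $\mathcal{L}_\sigma$-pair in the range targeting a different $F$-entry would contribute a second counted pair, contradicting $\First \ge \Last$. (In the other subcase the range is contained in the group, and the collapse is immediate from your structural argument.) So the counting convention of $occ$, not the single-occurrence property of $\tilde{\alpha}^+_j$ by itself, is what rules out a second LF-group. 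The rest of your argument---the computation $\Last_\ell = \Last$ and the identity $\Z_\ell = \Z_{\ell+1}\cap Z_{\rm m}$ via Lemma~\ref{lem:m-1}---is sound once this step is repaired.
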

\begin{proof}
Since $Z_{\rm m} \neq \emptyset$,
 there exist $\mathcal{L}_{\sigma}$-pairs of (m:1)-type in $(\First_{\ell+1},\Last_{\ell+1})$.
Furthermore, all of the $\mathcal{L}_{\sigma}$-pairs
 are mapped to one entry $F[\Last]$ of $F$
 since $\First \ge \Last$.
Therefore, $(\First_\ell,\Last_\ell) = (\Last,\Last)$.

Next, let us consider $\Z_\ell$.
In this case, $\mathcal{L}_{\sigma}$-pairs in $(\First_{\ell+1},\Last_{\ell+1})$
 are all of (m:1)-type.
Moreover, for every $\mathcal{L}_{\sigma}$-pair $(\sigma,i)$ in $(\First_{\ell+1},\Last_{\ell+1})$,
 the last characters of the suffixes in $SAA[i]$ are all $\sigma$ by Lemma~\ref{lem:m-1}.
Thus, $Z_{\rm m}$ is the set of the string numbers of the suffixes
 whose last characters are $\sigma$ in $SAA[\First_{\ell+1}..\Last_{\ell+1}]$.
By definition,
 $\Z_{\ell+1}$ is the set of the string numbers of the suffixes
 prefixed by $P[\ell+1..p]$.
Thus, a suffix $(j,q)$ in $SAA[\Last]$ is prefixed by $\sigma P[\ell+1..p]$ ($= P[\ell..P]$)
 if and only if $j \in Z_{\rm m}$ and $j \in \Z_{\ell+1}$.
Therefore, we get  $\Z_\ell = Z_{\rm m} \cap \Z_{\ell+1}$.
\end{proof}

\begin{lemma} \label{lem:range1c}
If $\,\First \ge \Last$ and $Z_{\rm m} = \emptyset$,
 then $(\,\First_\ell, \Last_\ell) = (\First, \Last)$
 and $\Z_\ell = \Z_{\ell+1} \cap Z_{\rm c}$.
\end{lemma}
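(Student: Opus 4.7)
The plan is to mirror the structure of Lemma~\ref{lem:range1m} but in the (1:1)-type regime. First I would use the hypotheses to classify which $\mathcal{L}_\sigma$-pairs can occur in the previous range $(\First_{\ell+1}, \Last_{\ell+1})$. From the formulas in line 5 of Algorithm~\ref{alg:FMA-search}, $\Last - \First + 1 = occ(\sigma, \Last_{\ell+1}) - occ(\sigma, \First_{\ell+1}-1)$, so $\First \ge \Last$ forces this count to be $0$ or $1$. Since $Z_{\rm m} = \emptyset$ excludes (m:1)-type $\mathcal{L}_\sigma$-pairs from the range, only two subcases survive: (A) no $\mathcal{L}_\sigma$-pair at all in the range, so $\First = \Last + 1$; or (B) a unique $\mathcal{L}_\sigma$-pair $k$ in the range, necessarily of (1:1)-type, so $\First = \Last = LF(\sigma, k)$.

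The range identity $(\First_\ell, \Last_\ell) = (\First, \Last)$ then follows directly from the definitions: in Case A the fallback $\First_\ell = \Last_\ell + 1$ matches the computed $\First = \Last + 1$, and in Case B the pair $k$ is simultaneously smallest and largest in the range, so $\First_\ell = \Last_\ell = LF(\sigma, k) = C[\sigma] + occ(\sigma, k)$, which equals the algorithm's $\First = \Last$ precisely because a (1:1)-type pair contributes exactly once to $occ$.

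The heart of the proof is $\Z_\ell = \Z_{\ell+1} \cap Z_{\rm c}$. Case A is immediate since $SAA[\First..\Last]$ is empty, forcing both sides to equal $\emptyset$. In Case B, I would first establish the characterisation $Z_{\rm c} = \{ j \in J_k : \widetilde{S}^j[q_j^k] = \sigma \}$, where $J_k$ and $q_k$ denote the string numbers and starting position of suffixes in $SAA[k]$, and $q_j^k$ denotes the largest non-empty position $< q_k$ in $\widetilde{S}^j$. One inclusion is immediate from the definition of LF-mapping together with Lemma~\ref{lem:LF}. The reverse inclusion is where the (1:1)-type property is essential: given $(j, q') \in SAA[\Last]$, its ``$\sigma$-stripped'' successor lies in some $SAA[k^*]$ with $LF(\sigma, k^*) = \Last$, and (1:1)-type means $k$ is the only such pair, so the successor must be $(j, q_k) \in SAA[k]$. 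With this characterisation in hand, the desired equality reduces to the observation that for $j \in Z_{\rm c}$ the suffix $(j, q_j^k)$ equals $\sigma$ followed by suffix $(j, q_k)$, so it is prefixed by $P[\ell..p]$ iff $(j, q_k)$ is prefixed by $P[\ell+1..p]$, iff $j \in \Z_{\ell+1}$ by the inductive invariant. I expect the main obstacle to be exactly this reverse inclusion: because $occ$ counts only the smallest pair in each LF-orbit, contributions to $SAA[\Last]$ coming from $\mathcal{L}_\sigma$-pairs outside $(\First_{\ell+1},\Last_{\ell+1})$ must be explicitly excluded, and that exclusion is the one place where the (1:1)-type hypothesis does genuine work.
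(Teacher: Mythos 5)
Your proposal is correct and follows essentially the same route as the paper's proof: both reduce the hypotheses to the case of at most one $\mathcal{L}_\sigma$-pair in $(\First_{\ell+1},\Last_{\ell+1})$, necessarily of (1:1)-type, and both use that (1:1)-type property to identify $Z_{\rm c}$ with the string numbers of the suffixes in the previous range whose last character is $\sigma$, from which $\Z_\ell = \Z_{\ell+1}\cap Z_{\rm c}$ follows. Your treatment is somewhat more explicit than the paper's (the empty-range subcase and the two inclusions for the $Z_{\rm c}$ characterisation are spelled out rather than asserted), but there is no difference in substance.
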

\begin{proof}
Since $Z_{\rm m} = \emptyset$,
 there is no $\mathcal{L}_{\sigma}$-pair of (m:1)-type in $(\First_{\ell+1},\Last_{\ell+1})$.
If $\First = \Last$,
 there is one $\mathcal{L}_{\sigma}$-pair of (1:1)-type in $(\First_{\ell+1},\Last_{\ell+1})$.
If $\First > \Last$,
 there is no $\mathcal{L}_{\sigma}$-pair of (1:1)-type in $(\First_{\ell+1},\Last_{\ell+1})$.
In both cases, $(\First_\ell,\Last_\ell) = (\First,\Last)$.

Next, let us consider $\Z_\ell$ when $\First = \Last$.
Let $(\sigma, i)$
 be the only one $\mathcal{L}_{\sigma}$-pair in $(\First_{\ell+1},\Last_{\ell+1})$.
Since $(\sigma, i)$ is of (1:1)-type,
 the set of the string numbers in $SAA[\Last]$ (i.e., $Z_{\rm c}$)
 is the same as the set of the string numbers of the suffixes whose last characters are $\sigma$
 in $SAA[\First_{\ell+1}..\Last_{\ell+1}]$.
Thus, a suffix $(j,q)$ in $SAA[\Last]$ is prefixed by $\sigma P[\ell+1..p]$ ($= P[\ell..P]$)
 if and only if $j \in Z_{\rm c}$ and $j \in \Z_{\ell+1}$.
Therefore, we get  $\Z_\ell = Z_{\rm c} \cap \Z_{\ell+1}$.
\end{proof}
Therefore, we can get the following theorem.
\begin{theorem} \label{lem:search-cor}
Algorithm~\ref{alg:FMA-search} finds correctly all the occurrences of a pattern $P$.
\end{theorem}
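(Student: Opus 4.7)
The plan is to prove Theorem~\ref{lem:search-cor} by induction on the step counter $\ell$, decreasing from $p$ down to $1$, establishing the loop invariant stated in the paper: at the end of Step $\ell$, the variables satisfy $(\First, \Last) = (\First_\ell, \Last_\ell)$ and $\Z = \Z_\ell$. Once this invariant is secured, correctness of the reporting phase (lines 14--15) follows directly from the characterization given just before the pseudocode: a suffix $(j,q)$ is prefixed by $P[1..p]$ if and only if $(j,q) \in SAA[\First_1..\Last_1]$ and $j \in \Z_1$.

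For the base case (Step $p$), I would verify that lines 1--2 set $(\First,\Last)$ to the full range of entries whose $F$-value equals $P[p]$, which by the definition of $C$ is exactly $(C[P[p]]+1, C[P[p]+1]) = (\First_p,\Last_p)$; and $\Z = \{1,\dots,m\} = \Z_p$ by definition (since $\First_p < \Last_p$ in any non-trivial input; the degenerate single-entry case requires only a routine consistency check, noting that $F[i]$ containing exactly one character means all represented suffixes begin with $P[p]$, so the string-number set defaults to $\{1,\dots,m\}$ as permitted).

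For the inductive step, assume the invariant holds at the end of Step $\ell+1$. After line 5, the new $(\First,\Last)$ equals $(C[\sigma] + occ(\sigma,\First_{\ell+1}-1)+1,\; C[\sigma] + occ(\sigma,\Last_{\ell+1}))$ with $\sigma = P[\ell]$, which is precisely the LF-mapping image of the smallest and largest $\mathcal{L}_\sigma$-pairs in the previous range. I would then dispatch to one of the three already-proven lemmas depending on the branch taken by lines 6--12: Lemma~\ref{lem:range2} handles $\First < \Last$, Lemma~\ref{lem:range1m} handles $\First \ge \Last$ with $Z_{\rm m} \ne \emptyset$, and Lemma~\ref{lem:range1c} handles the remaining case. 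In the $\First \ge \Last$ with $Z_{\rm m} \ne \emptyset$ branch, I must also check that line 9's assignment $\First \gets \Last$ correctly sets $\First_\ell = \Last_\ell = \Last$, matching the conclusion of Lemma~\ref{lem:range1m}. Finally, I would note that the two early-termination conditions in the while guard ($\First > \Last$ or $\Z = \emptyset$) simply preserve the invariant: in either case, no suffix can be prefixed by $P[\ell..p]$ for any smaller $\ell$, so the reporting loop correctly emits nothing.

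The main obstacle is not any deep combinatorial argument -- the three lemmas already do the heavy lifting -- but rather the bookkeeping of the edge cases, particularly making sure that (i)~the lazy maintenance of $\Z$ via $Z_{\rm c}$ when $\First > \Last$ is justified (here $Z_{\rm c} = \emptyset$, so $\Z$ becomes empty and the loop exits correctly), and (ii)~the conversion of transformed-alignment positions back to original positions in the reporting phase, which is straightforward via the stored gap information but must be mentioned for completeness. Combining the inductive conclusion with the characterization of $(\First_1,\Last_1,\Z_1)$ yields the theorem.
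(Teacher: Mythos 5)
Your proposal is correct and follows essentially the same route as the paper: the paper also establishes the loop invariant by induction on $\ell$, with the base case at Step $p$ and the inductive step dispatched to Lemmas~\ref{lem:range2}, \ref{lem:range1m}, and \ref{lem:range1c} according to which branch of lines 6--12 is taken. Your additional attention to the degenerate base case and the early-termination conditions is consistent with the paper's ``loose definition and lazy update'' remark and does not change the argument.
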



\subsection{Data structures}
\label{subsec:ds}


Our index consists of the function $occ$, the array $C$, the bit-vectors $B_\sigma$,
 and a sampled SAA.
Furthermore, we store gap information for mutual conversion
 between positions in an original string $S^j$
 and positions in its transformed string $\widetilde{S}^j$.

\partitle{sampled SAA}

We store the SAA using two kinds of sampling as in~\cite{tcs/NaKPLLMP16},
 the {\em regular-position sampling} and
 the {\em irregular-position sampling}.
For the regular-position sampling,
 we sample $SAA[i]$ storing every $d$-th position
  in the transformed alignment $\widetilde{\Upsilon}$
 where $d$ is a given parameter.
Then, we get an $SAA[i]$ in a sampled SAA by  repeating the LF-mapping from $SAA[i]$
 until a sampled entry $SAA[k]$ is encountered.
In order to guarantee that
 the string numbers in $SAA[i]$ are the same as the ones in $SAA[k]$,
 we also need the following irregular sampling:
 an $SAA[i]$ is sampled when $L[i]$ has multiple characters or,
 for any $\sigma \in \Sigma$, the pair $(\sigma,i)$ is of (m:1)-type.
Note that such an $SAA[i]$ has different string numbers
 from the string number in $SAA[i']$ where $i' = LF(\sigma,i)$ for a character $\sigma \in L[i]$.

\partitle{sampled ISAA}

For supporting retrieval operations,
 we also need a sampled inverse SAA.
In the FMA without gaps~\cite{tcs/NaKPLLMP16},
 the regular-position sampling is enough for the inverse SAA.
Due to gaps, however,
 we need also an irregular sampling for the inverse SAA.
Suppose that a gap in a transformed string $\widetilde{S}^j$
 includes a regular sampling position $q$.
Then, we cannot sample the position $q$ in $\widetilde{S}^j$.
Let $q'$ be the leftmost position such that
 $q' > q$ and no gap in $\widetilde{S}^j$ includes $q'$.
Then, the position $q'$, instead of $q$, is sampled in $\widetilde{S}^j$.
For example, assuming position 4 is a regular sampling position in Fig.~\ref{fig:alignment},
 instead of position 4, position 6 is sampled in $\widetilde{S}^4$
 since $\widetilde{S}^4[3..5]$ is a gap.


\section{Experiments}
\label{sec:exp}


To compare our FM-index of alignment (FMA) with the RLCSA~\cite{jcb/MakinenNSV10},
 we measured their sizes, pattern search time and retrieval time.
Our index was implemented using SDSL (Succinct Data Structure Library~\cite{sea/GogBMP14})
and all experiments were conducted on a computer with Intel Xeon X5672 CPU and 32GB RAM,
 running the Linux debian 3.2.0-4-amd64 operating system.

The experimental data set is a reference sequence and 99 individual sequences,
 which are downloaded from the 1000 Genomes Project website.
The reference genome is hs37d5 and each individual sequence consists of a pair of BAM and BAI files,
 where a BAM file contains reads (short segments of length 90-125) of each individual
 and a BAI file contains the alignment of the reads.
Each pair of BAM and BAI files is fed to SAMtools (Sequence Alignment/Map tools) to obtain a VCF file
 which stores genetic mutations such as substitutions, insertions and deletions relative to the reference genome.


\begin{table}[t]
\centering \small
\caption{\label{tbl:exp-space}
The index sizes (in MBytes) where
 ``sampling" means the space for sampling,
 ``gap" means the space for storing gap information,
  and ``core" means the space except for sampling and gap.
}
\begin{tabular}{c c | r r r | r r r}
\hline 
\multicolumn{2}{c|}{Number of sequences} & \multicolumn{3}{c|}{30} & \multicolumn{3}{c}{100} \\
\hline 
\multicolumn{2}{c|}{Sampling rate}
    & \multicolumn{1}{c}{32}    & \multicolumn{1}{c}{128}   & \multicolumn{1}{c|}{512}
    & \multicolumn{1}{c}{32}    & \multicolumn{1}{c}{128}   & \multicolumn{1}{c}{512} \\
\hline 
 & total & 57.5 & 49.2 & 47.1 & 85.5 & 75.6 & 73.1 \\
FMA & core & 38.2 & 38.2 & 38.2 & 49.7 & 49.7 & 49.7 \\
 & gap & 1.4 & 1.4 & 1.4 & 5.1 & 5.1 & 5.1 \\
 & sampling & 17.9 & 9.6 & 7.5 & 30.7 & 20.8 & 18.3 \\

\hline 
 & total & 390 & 193 & 141 & 1113 & 417 & 233 \\
RLCSA & core & 122 & 122 & 122 & 168 & 168 & 168 \\
 & sampling & 268 & 71 & 19 & 945 & 249 & 65 \\

\hline 
\end{tabular}

\end{table}


First, we compared the sizes of our FMA with those of RLCSA.
We created these indexes with sampling rates d = 32, 128, and 512 from 30 and 100 sequences
 (Table~\ref{tbl:exp-space}).
The table shows that the FMA requires less than one third of the space of RLCSA in every case.
Furthermore, the size of the FMA varies little regardless of the sampling rates
 because irregular sampling occupies most space for sampling.


\begin{table}[t]
\centering \small
\caption{\label{tbl:exp-loc}
Pattern search (location) time (in secs) for 500 queries of length 10.
}
\begin{tabular}{c | r r r | r r r}
\hline 
Number of sequences & \multicolumn{3}{c|}{30} & \multicolumn{3}{c}{100} \\
\hline 
Sampling rate& \multicolumn{1}{c}{32} & \multicolumn{1}{c}{128} &   \multicolumn{1}{c|}{512}
    & \multicolumn{1}{c}{32}    & \multicolumn{1}{c}{128}   & \multicolumn{1}{c}{512} \\
\hline 
FMA & 6.40 & 19.86 & 48.63 & 15.13 & 25.53 & 37.84 \\
RLCSA & 6.94 & 31.08 & 177.89 & 20.04 & 102.36 & 622.42 \\
\hline 
\end{tabular}

\end{table}


\begin{figure}[t]

{\centering
    \includegraphics[height=4.0cm]{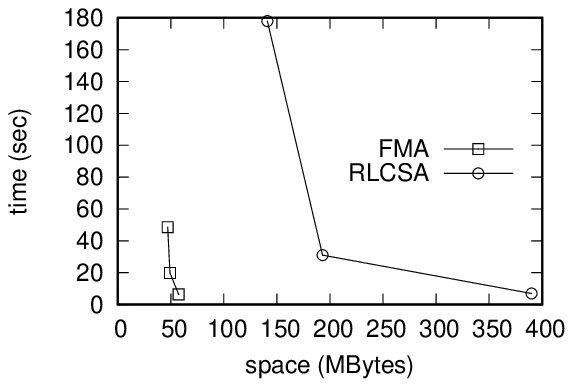}
    \quad
    \includegraphics[height=4.0cm]{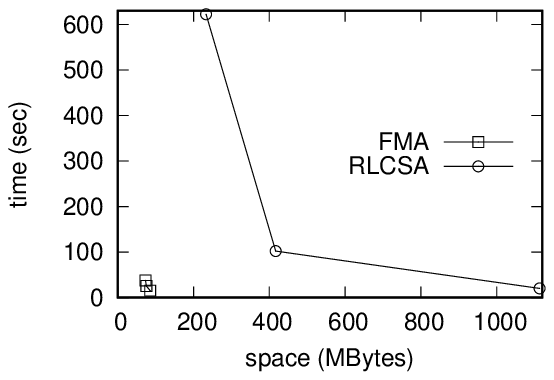}\\
   (a) 30 sequences
   \hspace{4cm}
   (b) 100 sequences\\
}
\caption{\label{fig:exp-loc}
Total index sizes and pattern search (location) times for 500 queries of length 10.
Each index was tested with sampling rates $d = 32$, 128, and 512.
}
\end{figure}


Second, we compared the running time of pattern search (location) reporting all occurrences.
We performed the pattern search with patterns of length 10 on the indexes
 with sampling rates d = 32, 128, and 512 from 30 and 100 sequences
 (Table~\ref{tbl:exp-loc} and Fig.~\ref{fig:exp-loc}),
 and FMA is the fastest in all cases.
We also compared the retrieval time
 (Table~\ref{tbl:exp-ret} and Fig.~\ref{fig:exp-ret}).
In all cases, RLCSA shows the best performance in retrieval time.


\begin{table}[t]
\centering \small
\caption{\label{tbl:exp-ret}
Retrieval time (in secs) for 500 queries of length 10.
}
\begin{tabular}{c | r r r | r r r}
\hline 
Number of sequences & \multicolumn{3}{c|}{30} & \multicolumn{3}{c}{100} \\
\hline 
Sampling rate& \multicolumn{1}{c}{32} & \multicolumn{1}{c}{128} &   \multicolumn{1}{c|}{512}
    & \multicolumn{1}{c}{32}    & \multicolumn{1}{c}{128}   & \multicolumn{1}{c}{512} \\
\hline 
FMA & 0.05 & 0.11 & 0.40 & 0.04 & 0.12 & 0.43 \\
RLCSA & 0.01 & 0.02 & 0.10 & 0.01 & 0.02 & 0.10 \\
\hline 
\end{tabular}

\end{table}



\begin{figure}[t]

{\centering
    \includegraphics[height=4.0cm]{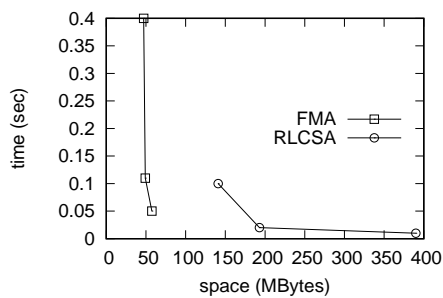}
    \quad
    \includegraphics[height=4.0cm]{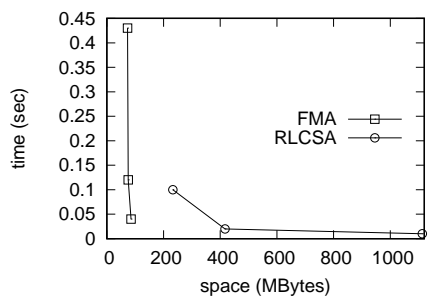}\\
   (a) 30 sequences
   \hspace{4cm}
   (b) 100 sequences\\
}
\caption{\label{fig:exp-ret}
Total index sizes and retrieval times for 500 queries of length 10.
Each index was tested with sampling rates $d = 32$, 128, and 512.
}
\end{figure}



\section{Concluding Remarks}
\label{sec:conclusion}


We have proposed the FM-index of alignment with gaps,
 a realistic index for similar strings, which allows gaps in their alignment.
For this,
 we have designed a new version of suffix array of alignment
 by using alignment transformation and a new definition of the alignment-suffix.
The new SAA enabled us to support the LF-mapping and backward search
 regardless of gap existence in alignments.
Experimental results showed that our index is more space-efficient than RLCSA
 and it is faster than RLCSA in pattern search
even though its retrieval time is slower than that of RLCSA.
It remains as future work to do extensive experiments and analysis on various real-world data.


\section*{Acknowledgements}
\small

Joong Chae Na was supported
  by the MISP(Ministry of Science, ICT \& Future Planning), Korea,
  under  National program for Excellence in Software program (the SW oriented college support grogram) (R7718-16-1005)
  supervised by the IITP (Institute for Information \& communications Technology Promotion),
  and by Basic Science Research Program
    through the National Research Foundation of Korea (NRF)
    funded by the Ministry of Science, ICT \& Future Planning (2014R1A1A1004901).
%
Heejin Park was supported
  by the research fund of Signal Intelligence Research Center supervised
    by Defense Acquisition Program Administration and Agency for Defense Development of Korea.
Thierry Lecroq, Martine L\'eonard  
 and Laurent Mouchard were supported
  by the French Ministry of Foreign Affairs Grant 27828RG (INDIGEN, PHC STAR 2012).
%
Kunsoo Park was supported
  by the Bio \& Medical Technology Development Program of the NRF
  funded by the Korean government, MSIP (NRF-2014M3C9A3063541).

\bigskip

%
%




\end{document}